\newtheorem{thm}{Theorem}
\newtheorem{rem}[thm]{Remark}
\newcommand{\Z}{\mathds{Z}}
\newcommand{\F}{\mathds{F}}
\newcommand{\ones}{\mathds{1}}
\newcommand{\st}{\text{s.t.}}
\newcommand{\norm}[1]{\lVert {#1} \rVert}
\newcommand{\abs}[1]{\lvert {#1} \rvert}
\newcommand{\define}{\coloneqq}
\newcommand{\NP}{\textsf{NP}}
\renewcommand{\P}{\textsf{P}}
\newcommand{\A}{\bm{A}}
\renewcommand{\H}{\bm{H}}
\newcommand{\G}{\bm{G}}
\newcommand{\h}{\bm{h}}
\newcommand{\w}{\bm{w}}
\newcommand{\s}{\bm{s}}
\newcommand{\x}{\bm{x}}
\newcommand{\z}{\bm{z}}
\newcommand{\0}{\bm{0}}
\newcommand{\bones}{\bm{\ones}}
\newcommand{\bgamma}{\bm{\gamma}}
\DeclareMathOperator{\rank}{rank}
\DeclareMathOperator{\supp}{supp}
\definecolor{darkgreen}{rgb}{0,0.7,0}
\date{\today}
\begin{document}

\title{Exact separation of forbidden-set cuts associated with redundant parity checks of binary linear codes}
\author{%
  Christian Puchert\IEEEmembership{} ~and~ Andreas M.
  Tillmann\IEEEmembership{}
  \thanks{Manuscript received April 7, 2020.} 
  \thanks{This work was conducted in the RWTH ERS Start-Up project
    ``Efficient exact maximum-likelihood decoding and minimum-distance
    computation for binary linear codes'' (Tillmann), funded by the
    Excellence Initiative of the German federal and state governments.}%
  \thanks{C. Puchert was with the Chair of Operations Research at RWTH
    Aachen University, Kackertstr. 7, 52072 Aachen, Germany (e-mail:
    puchert@or.rwth-aachen.de), and is now with Ab Ovo Business and
    Software Solutions, D\"usseldorf, Germany.}%
  \thanks{A. M. Tillmann was with the Chair of Operations Research and the
    Visual Computing Institute at RWTH Aachen University, Germany, and is
    now with the Institute for Mathematical Optimization, Technische
    Universit\"at Braunschweig, Universit\"atsplatz 2, 38106 Braunschweig,
    Germany (e-mail: a.tillmann@tu-bs.de).}
    \thanks{This work has been
      submitted to the IEEE for possible publication. Copyright may be
      transferred without notice, after which this version may no longer be
      accessible.}%
}

\markboth
{Exact separation of forbidden-set cuts associated with redundant parity checks of binary linear codes}
{Exact separation of forbidden-set cuts associated with redundant parity checks of binary linear codes}

\maketitle

\begin{abstract}
  In recent years, several integer programming (IP) approaches were
  developed for maximum-likelihood decoding and minimum distance
  computation for binary linear codes. Two aspects in particular have been
  demonstrated to improve the performance of IP solvers as well as adaptive
  linear programming decoders: the dynamic generation of forbidden-set
  (FS) inequalities, a family of valid cutting planes, and the utilization
  of so-called redundant parity-checks (RPCs). However, to date, it had
  remained unclear how to solve the exact RPC separation problem (i.e., to
  determine whether or not there exists any violated FS inequality
  w.r.t. any known or unknown parity-check). In this note, we prove
  \NP-hardness of this problem. Moreover, we formulate an IP model that
  combines the search for most violated FS cuts with the generation of
  RPCs, and report on computational experiments. Empirically, for various
  instances of the minimum distance problem, it turns out that while
  utilizing the exact separation~IP does not appear to provide a
  computational advantage, it can apparently be avoided altogether by
  combining heuristics to generate RPC-based cuts.
\end{abstract}


\begin{IEEEkeywords}
  computational complexity, 
  linear codes, 
  binary codes, 
  integer linear programming, 
  Hamming weight
\end{IEEEkeywords}

\IEEEpeerreviewmaketitle

\section{Introduction \& Preliminaries}\label{sec:intro}
\IEEEPARstart{T}{he} well-known minimum distance computation problem for
binary linear codes, defined by a parity-check matrix
$\H\in\{0,1\}^{m\times n}$, can be stated as
\begin{align}
  \label{eq:mindist} \min\quad &\norm{\x}_0\\
  \nonumber \st\quad &\H\x=\0\mod 2,~\x\neq \0,~\x\in\{0,1\}^n,
\end{align}
where $\norm{\cdot}_0$ denotes the Hamming weight (i.e., number of
nonzeros). Similarly, the maximum-likelihood decoding (MLD)
problem is obtained by replacing the objective function with
$\bgamma^\top\x$, where $\bgamma$ is a (known) vector of negative
log-likelihoods (see, e.g., \cite{DraperYW2007}). Both
problem~\eqref{eq:mindist} and MLD are \NP-hard, see~\cite{Vardy1997}
and~\cite{BerlekampMcEvT1978}, respectively. For simplicity, we will
focus on~\eqref{eq:mindist} throughout, but emphasize that as we will
mainly deal with issues pertaining to the feasible set, the results are
also applicable in the MLD context.

Problem~\eqref{eq:mindist} (and analogously, the MLD problem) can be
rewritten in several ways (see, e.g.,
\cite{KehaD2010,PunekarKWTRH2010,TanatmisRHPKW2009,YangWF2007,KabakulakTP2018});
a fairly straightforward, and the most common generic approach
(cf.~\cite{BreitbachBLK1998,TanatmisRHPKW2010,SchollKHR2013}) is to
reformulate~\eqref{eq:mindist} as the integer program
\begin{align}
  \label{eq:mindistbasicip}
  \min\quad &\bones^\top \x\\
  \nonumber \st\quad &\H\x -2\z = \0,~\bones^\top \x\geq
                       1,~\x\in\{0,1\}^n,~\z\in\Z^n_{\geq 0},
\end{align}
where $\bones$ is the all-ones vector. Indeed, it is easily seen that the
auxiliary nonnegative integer variables $\z$ enforce the required (even)
codeword parity in all rows of the equality constraints.

Modern IP solvers are based on effective combination of the
branch-and-bound paradigm (creating subproblems by fixing some variables
and pruning parts of the resulting search tree, e.g., if local objective
bounds prevent further improvements in the respective subtree) and the use
of (linear) cutting planes, i.e., inequalities that are valid for integral
feasible points but may be violated by fractional solutions of the commonly
employed linear programming (LP) relaxations.  Polyhedral investigations of
the codeword polytope (i.e., the convex hull of integer feasible points
of~\eqref{eq:mindist}) in
\cite{BarahonaG1986,GroetschelT1989b,Tanatmis2011,Jeroslow1975,FeldmanWK2005,
  YangWF2007,HelmlingRT2012,LanciaS2018} (and other works) have identified,
in particular, the following class of valid inequalities to strengthen the
LP relaxation and to improve solver performance when tackling problems
like~\eqref{eq:mindistbasicip}:
\begin{equation}\label{eq:FScut}
  \sum_{j \in S} x_{j} - \sum_{j \in \supp(\h) \setminus S} x_{j} \leq |S| - 1,
\end{equation}
where
\mbox{$\emptyset\neq S\subset\supp(\h)\define\{j\in\{1,2,\dots,n\}:h_j\neq 0\}$}
is odd and $\h$ is a parity-check. These inequalities became known as (odd)
forbidden-set (FS) inequalities. It has been established that for any given
parity-check, at most one of the associated FS inequalities can be violated
at a time (by a given point)~\cite{TaghaviS2008}, and that if a violated
one (called a \emph{cut}) exists, it can be found in polynomial time; the
state-of-the-art method for the latter is the sorting-based routine
described in~\cite{ZhangS2012} (see also~\cite{FalsafeinM2016}).

Moreover, several papers discussing IP or adaptive LP approaches
for~\eqref{eq:mindist} or MLD incorporate heuristics to find new redundant
parity-checks (RPCs)---i.e., some linear combinations (over $\F_2$, i.e.,
in modulo-2 arithmetic) of rows of~$\H$---that hopefully yield violated
FS inequalities, see, for instance,
\cite{TaghaviS2008,MiwaWT2009,ZhangS2012,FalsafeinM2016}.
While some results exist on when such heuristics will succeed in finding a
cut that is violated by a given (fractional) point (cf., e.g.,
\cite{TaghaviS2008,TanatmisRHPKW2010}), 
there is no general guarantee for this. Indeed, it has often been claimed
that the problem of finding cut-inducing RPCs is intractable (see, e.g.,
\cite{TanatmisRHPKW2010,FalsafeinM2016,TaghaviS2008}), but a formal proof
of this statement appears to be missing so far.

In the following, we close this gap by providing an \NP-hardness proof of
the RPC separation problem. Moreover, we will introduce an IP model to
solve it, i.e., to find a maximally violated FS cut among all possible FS
inequalities w.r.t.\ all parity-checks, known or unknown (also allowing to
conclude that all FS inequalities are satisfied, should this be the case),
and assess its potential practical applicability in some computational
experiments with problem~\eqref{eq:mindistbasicip}.

\section{Exact RPC Cut Generation/Separation}
\noindent
Clearly, if for a given $\x^*\in[0,1]^n$, not all FS
inequalities~\eqref{eq:FScut} are satisified, there exists at least one
parity-check~$\h$ and odd index set $S\subset\supp(\h)$ such that the
constraint violation
\[
-|S|+1 +\sum_{j\in S}x^*_{j} - \sum_{j\in\supp(\h)\setminus S}x^*_{j} > 0.
\]
Thus, the \emph{RPC separation problem} can be stated as follows:
\emph{Given $\x^*\in[0,1]^n$, find a valid parity-check that yields a
  maximally violated FS cut, or conclude that all FS inequalities are
  satisfied.}  This task can be formulated as the optimization problem
\begin{align}\label{RPCsepaprob}
  \max\quad &-\abs{S}+1+\sum_{j\in S}x^*_j -\sum_{j\in\supp(\h)\setminus S}x^*_j\\
\nonumber  \text{s.t.}\quad &\h\text{ is a parity-check,}\\
\nonumber  &\emptyset\neq S\subseteq\supp(\h),~\abs{S}\text{ odd};
\end{align}
clearly, the optimal objective value is strictly larger than zero
if and only if a cut-inducing parity-check $\h$ exists. The objective
is the violation of the respective FS inequality associated with~$\h$
and an odd subset $S$ of its support at the given point~$\x^*$, and the
optimization problem searches for the largest such violation among all
valid parity-checks.

With a few observations, we can turn the above abstract optimization
problem into an equivalent concrete IP: First, recall that parity-checks
correspond to dual codewords, cf.~\cite{TaghaviS2008}; thus, we can express
the first constraint in terms of a generator matrix $\G$ of the code (which
can be obtained as a basis matrix for the $\F_2$-nullspace of the given
parity-check matrix $\H$ by Gaussian elimination) as $\G^\top \h=\0\mod 2$
with $\h\in\{0,1\}^n$. To model the odd subsets $S$ of $\supp(\h)$, we
introduce binary variables $\s\leq \h$ with the constraint that
$\ones^\top \s$ is odd, i.e., $\ones^\top \s=1\mod 2$. The condition
$S\neq\emptyset$ is then reflected as $\ones^\top s\geq 1$, and is
automatically ensured by the previous odd-sum constraint. The objective
of~\eqref{RPCsepaprob} can be rewritten as
\begin{align*}
    &-\abs{S}+1-\sum_{j\in\supp(\h)}x^*_j+2\sum_{j\in S}x^*_j\\
  = &-\ones^\top \s +1 -(\x^*)^\top \h+2(\x^*)^\top(\s) \\
  = &(2\x^*-\ones)^\top \s-(\x^*)^\top \h+1.    
\end{align*}
Assuming w.l.o.g.\ that $\rank(\H)=m<n$ (so $\G\in\{0,1\}^{n\times(n-m)}$
with rank $n-m$), with auxiliary integer variables
$\z^h\in\Z^{n-m}_{\geq 0}$ and $z^s\in\Z_{\geq 0}$ to linearize the modulo-2
equations, we thus arrive at the \emph{RPC separation IP} (equivalent
to~\eqref{RPCsepaprob})
\begin{align}\label{RPCsepaIP}
  \max\quad &(2\x^*-\ones)^\top \s-(\x^*)^\top \h+1\\
\nonumber  \text{s.t.}\quad &\G^\top \h -2\z^h =\0\\
\nonumber            &\ones^\top \s -2z^s=1\\
  \nonumber            &\s\leq \h\\
  \nonumber            &\h,\s\in\{0,1\}^n,~\z^h\in\Z^{n-m}_{\geq 0},~z^s\in\Z_{\geq 0}.
\end{align}
(An alternative formulation that does not require knowledge, or
construction, of the generator matrix $\G$ can be obtained by replacing the
first equality constraint by $\h=\H^\top \w - 2\z^h$ with $\w\in\{0,1\}^m$
and $\z^h\in\Z^{n}_{\geq 0}$; however, this variant has $m$ additional
binary variables and $m$ more integer variables.)

Our following result establishes that, in general, exact RPC
separation is computationally challenging:
\begin{thm}\label{thm:RPCsepaNPhard}
  The RPC separation problem is \NP-hard.
\end{thm}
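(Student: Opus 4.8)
The plan is to reduce from a known $\NP$-hard problem about binary linear codes---most naturally the maximum-likelihood decoding problem or, even more directly, the problem of deciding whether a code contains a codeword of weight at most $w$ (the decision version of~\eqref{eq:mindist}), both $\NP$-hard by~\cite{Vardy1997,BerlekampMcEvT1978}. The key idea is that the RPC separation problem, through the objective of~\eqref{RPCsepaprob}, is essentially asking for a dual codeword $\h$ and an odd subset $S\subseteq\supp(\h)$ maximizing $(2\x^*-\ones)^\top\s-(\x^*)^\top\h$; by choosing the query point $\x^*$ cleverly (e.g.\ all entries equal to some constant, or a $0/1$ pattern encoding a target syndrome/weight bound), this objective collapses to a weight-type functional on dual codewords. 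Since the dual of the dual is the primal, optimizing such a functional over dual codewords of the code defined by $\H$ is the same as optimizing it over codewords of the code generated by $\H$, which lets us embed a minimum-distance or decoding instance.

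Concretely, I would proceed as follows. First, fix the reduction source: given a generator matrix $\G_0$ (equivalently a code $\cC_0$) and an integer $w$, decide whether $\cC_0$ contains a nonzero codeword of Hamming weight $\le w$. Second, set up the RPC separation instance so that the ``parity-check matrix'' $\H$ in~\eqref{RPCsepaprob} is taken to be $\G_0$ itself; then the dual codewords $\h$ of this instance are exactly the codewords of $\cC_0$. Third, engineer $\x^*\in[0,1]^n$ so that the maximal violation in~\eqref{RPCsepaprob}, as a function of $(\h,S)$, is a strictly decreasing function of $\norm{\h}_0$ (or more precisely a function whose maximum exceeds a computable threshold iff some nonzero dual codeword has weight $\le w$). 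The natural candidate is $x^*_j \equiv c$ for a suitable constant $c\in(0,1)$: then $(2\x^*-\ones)^\top\s = (2c-1)\abs{S}$ and $(\x^*)^\top\h = c\norm{\h}_0$, so the objective becomes $(2c-1)\abs{S} - c\norm{\h}_0 + 1$; with $S$ chosen optimally as a smallest odd subset (so $\abs{S}=1$ when $2c-1<0$), this is $2c - c\norm{\h}_0$, which is monotone in $\norm{\h}_0$, and comparing it against the threshold for $\norm{\h}_0 = w$ decides the weight question. One must double-check the parity/oddness bookkeeping (the constraint $\abs{S}$ odd, $S\subseteq\supp(\h)$), handle the edge case $\h=\0$, and confirm that the reduction is polynomial (including that $c$ can be taken as a rational with polynomially many bits, or that a $0/1$ perturbation achieves the same effect). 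A symmetric argument handles MLD by using the $\bgamma$-weighted objective pattern directly in $\x^*$.

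The main obstacle I anticipate is not the algebraic manipulation but getting the \emph{direction} of the reduction and the role of the subset $S$ exactly right: the FS objective rewards large $S$ and small $\supp(\h)\setminus S$, so one has to make sure the choice of $\x^*$ does not let a large support $\supp(\h)$ be ``compensated'' by a cleverly chosen large $S$---this is why an essentially constant $\x^*$ with $2c-1<0$ (penalizing large $S$) is attractive, forcing $\abs{S}=1$ and reducing the problem cleanly to minimizing $\norm{\h}_0$ over nonzero dual codewords, i.e.\ to the minimum-distance problem for the dual code. A secondary subtlety is that the reduction must map to the \emph{optimization/decision} version of RPC separation as stated (``find a maximally violated cut, or conclude all are satisfied''): one should phrase the target as ``does there exist a parity-check and odd $S$ with violation $> \tau$'' for an explicit rational threshold $\tau$ depending on $w$ and $c$, so that a polynomial-time RPC-separation oracle answers the minimum-distance decision question. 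Once these points are pinned down, the remaining verification---that the constructed instance has the claimed optimal value, and that everything is polynomial-size---is routine.
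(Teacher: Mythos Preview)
Your proposal is correct and follows essentially the same route as the paper: reduce from the minimum-distance problem by choosing a constant vector $\x^*$, so that the FS-violation objective collapses to a monotone function of $\norm{\h}_0$ over nonzero dual codewords. The paper's execution is slightly cleaner in that it picks $c=\tfrac12$ exactly, making the coefficient $2c-1$ of $\abs{S}$ vanish; then $\s$ drops out of the objective entirely, the odd-$\abs{S}$ constraint serves only to force $\h\neq\0$, and the separation IP becomes literally $1-\tfrac12\min\{\ones^\top\h:\A\h=\0\bmod 2,\ \h\neq\0\}$ with $\A\coloneqq\G^\top$, so no threshold argument or ``$\abs{S}=1$ is optimal'' step is needed.
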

\begin{proof}
  The \NP-hardness is due to the fact that the separation problem contains
  as a special case the task of finding the girth of a binary matroid
  (i.e., the minimum distance problem~\eqref{eq:mindist}), which is well-known to be
  \NP-hard~\cite{Vardy1997}: 
  Let $\A\in\{0,1\}^{m\times n}$ be an arbitrary instance of the binary
  matroid girth problem; w.l.o.g., we may assume $\rank(\A)=m<n$ (over
  $\F_2$). With $\x^*\define(1/2)\ones$ and $\G^\top\define\A$,
  the corresponding instance of the RPC separation problem (in IP
  form~\eqref{RPCsepaIP}) collapses to
  \begin{align*}
    \max_{\s,\h\in\{0,1\}^n} &-\tfrac{1}{2}\ones^\top \h +1\\
    \st\quad &\A \h=\0\mod 2,~\s\leq \h,~\ones^\top \s=1\mod 2\\
    \Leftrightarrow\quad 1-\tfrac{1}{2}&\min_{\h\in\{0,1\}^n}\left\{\ones^\top \h\,:\,\A \h=\0\mod 2,~\h\neq \0\right\}.
  \end{align*}
  The latter minimization problem is precisely the girth problem for the
  instance given by $\A$, so unless $\P=\NP$, there cannot exist a
  polynomial-time algorithm for the RPC separation problem.
\end{proof}
\begin{rem}
  Note that, although the above proof shows hardness of a separation
  problem for a large class of valid inequalities for the minimum distance
  problem~\eqref{eq:mindist} (and others) by exhibiting that very same
  problem as a special case, the result is \emph{not} implied by the
  general so-called ``equivalence of optimization and separation''
  (cf.~\cite{GroetschelLS1993}). Indeed, even if \emph{all} FS inequalities
  for \emph{all} parity-checks (given and redundant ones) were added to the
  IP model~\eqref{eq:mindistbasicip}, its LP relaxation is not guaranteed
  to yield an integral (binary) optimum point~\cite{ZumbraegelSF2012}.

  Moreover, using results from~\cite{McGregorM2010}, \NP-hardness of the
  RPC separation problem can be seen to persist even if $\A$ is restricted
  to be the parity-check matrix of an LDPC code.
\end{rem}

\begin{rem}
  The problem obtained by minimizing the RPC separation IP's objective
  (instead of maximizing it) is, in fact, also \NP-hard, as it contains the
  well-known max-cut problem as a special case (similar to the problem of
  finding a cycle of maximum weight in a binary matroid,
  cf.~\cite{GroetschelT1989b}); we omit the proof for brevity. The fact
  that both maximizing and minimizing the objective function over the
  feasible set of the RPC separation problem are \NP-hard indicates that
  the hardness is inherent in the combinatorial constraints. (Similarly,
  the paper \cite{NtafosH1981} gave a hardness-of-maximization result
  complementing the intractability of the minimization
  problem~\eqref{eq:mindist} shown
  later 
  in~\cite{Vardy1997}.)
 \end{rem}

 Naturally, given its inherent intractability as established by
 Theorem~\ref{thm:RPCsepaNPhard}, solving the RPC separation
 IP~\eqref{RPCsepaIP} will be more challenging than running the heuristic
 schemes from, e.g., \cite{ZhangS2012,FalsafeinM2016}. Ideally, the
 additional computational effort to gain stronger cuts (and corresponding
 ``optimal'' new parity-checks) could lead to a sufficient benefit in terms
 of search space reduction (particularly, LP relaxation tightening) to
 compensate for the increased runtime overhead that can be expected. It is
 also worth pointing out that the IP~\eqref{RPCsepaIP} may also be used in
 a heuristic fashion: any feasible solution with positive objective value
 yields a violated FS inequality, so the solving process may be terminated
 early as soon as the current objective value turns positive, still
 providing a cut (although not necessarily a most violated one).

\section{Computational Experiments}\label{sec:experiments}
\noindent
We now turn to some numerical experiments, based on the minimum distance
IP~\eqref{eq:mindistbasicip}, to assess the influence of separating FS
inequalities and redundant parity-checks in different ways. We employ
the open-source MIP solver SCIP~\cite{scip} (with the LP solver SoPlex that
comes with it) and use a variety of test instances from the Channel Code
Database~\cite{channelcodes}. Our test set consists of 27 parity-check
matrices, of which 5 belong to array codes, 3 to BCH/Hamming codes, 6 to
LDPC codes, 2 to polar codes, one is the Tanner(3,5) code, 5 belong to
WiMax and 3 to WiMax-like codes, and 2 to WRAN codes. The average matrix
size is about $106\times 372$, ranging from $8$ to $588$ rows
(parity-checks) and from $32$ to $1344$ variables.
For brevity, we forego instance-specific details.

All experiments were carried out in single-thread mode on a Linux machine
with Intel Core i7-7700T CPUs (2.9\,GHz, 8\,MB cache) and 16\,GB memory,
using SCIP~6.0.2. We set a time limit of 1 hour (3600\,s) per instance.

The state-of-the-art separation routine from~\cite{ZhangS2012} for FS
inequalities associated with \emph{given} parity checks is, in fact,
implemented in SCIP already (as part of the XOR constraint handler). We
extended the implementation by the best-known RPC cut generation heuristics
from~\cite{ZhangS2012} and from~\cite{FalsafeinM2016}, respectively, as
well as by an exact separation routine based on the RPC separation
IP~\eqref{RPCsepaIP}. 

First, we used SCIP as a black-box, applied to the
model~\eqref{eq:mindistbasicip}; by default, SCIP does not separate FS
inequalities. Then, with separation activated in the root node only, in
every node, or on every 5-th level of the branch and bound tree,
respectively, we tested the following solver variations:
\begin{enumerate}
  \item exact separation of FS inequalities for
    existing parity-checks according to~\cite{ZhangS2012} (``ZS'', for short),
  \item ZS, followed by exact RPC cut separation via \eqref{RPCsepaIP}, if
    ZS found no violated FS inequality (``ZSe''),
  \item ZS, followed (if unsuccessful) by the RPC cut heuristic from
    \cite{ZhangS2012}, followed (if also unsuccessful) by the exact
    separation routine (``ZS+e''),
  \item ZS, followed (if unsuccessful) by the RPC cut heuristic from
    \cite{ZhangS2012}, then (if still unsuccessful) by that from
    \cite{FalsafeinM2016}, and finally (if still unsuccessful) by
    the exact separation routine (``ZS++e'').
\end{enumerate}
We 
leave all of SCIP's many other solver settings (w.r.t. general-purpose
heuristics, cutting planes, branching rules etc.) at their respective
default values. The solver also automatically takes care of cutting plane
management, i.e., we neither force inequalities into the LP nor do we
remove them by hand, but let the solver decide how to handle and store
cuts. (Nevertheless, FS inequalities are given the highest separation
priority, so that the solver first tries to find cuts of this type.)

Table~\ref{tab:results1} presents average results over all instances,
namely the arithmetic mean optimality gap ((upper bound $-$ lower
bound)/(lower bound)$\cdot 100$\%), the number of instances (out of 27)
that were solved to optimality within the given time limit, and the average
number of processed branch-and-bound (search tree) nodes as well as the
average running times (in seconds) of the different solver variants. For
nodes and times, we state shifted geometric mean values with shifts 100
and~10, respectively, to mitigate the influence of easy instances.

\begin{table}
  \caption{Overview of experimental results.}
  \label{tab:results1}
  \begin{center}
    \begin{tabular*}{\columnwidth}{@{\extracolsep{\fill}}@{\quad}l@{\quad}r@{\quad}r@{\quad}r@{\qquad}r@{\quad}}
      \toprule
      solver variant          & gap (\%) & \# solved & nodes & time (s) \\
      \midrule
      SCIP default            & 15.92         & 21        & 74464.3    & 112.9 \\[0.5em]
      ZS (sepa. root only)    & 12.37         & 21        & 81829.1    & 116.9 \\
      ZS (sepa. always)       &  1.45         & 26        &  4394.1    &  84.1 \\
      ZS (sepa. freq. 5)      &  1.28         & 26        & 14518.6    &  70.9 \\[0.5em]
      ZSe (sepa. root only)   & 12.37         & 21        & 81833.0    & 117.1 \\
      ZSe (sepa. always)      & 1526.64       &  6        &   145.2    & 1562.4 \\
      ZSe (sepa. freq. 5)     & 557.50        & 10        &  836.8     & 837.7  \\[0.5em]
      ZS+e (sepa. root only)  & 12.37         & 21        & 81816.9    & 117.1 \\
      ZS+e (sepa. always)     & 12.63         & 19        &  1517.0    & 468.6 \\
      ZS+e (sepa. freq. 5)    & 12.25         & 21        &  7751.2    & 293.4 \\[0.5em]
      ZS++e (sepa. root only) & 17.60         & 21        & 150529.4   & 182.8 \\
      ZS++e (sepa. always)    &  5.43         & 24        &  2564.8    & 128.5 \\
      ZS++e (sepa. freq. 5)   &  7.76         & 23        &  9256.5    & 108.9 \\
      \bottomrule            
    \end{tabular*}
  \end{center}
\end{table}

First of all, the overview in Table~\ref{tab:results1} clearly confirms
previous studies: adding FS cuts during the solving process can
signi\-ficantly help solve IPs like~\eqref{eq:mindistbasicip}. While adding
cuts only in the root node can be seen to (adaptively) strengthen the
original LP relaxation, the effect becomes much more prominent when
parity-checks are also used to separate cuts in deeper levels of the search
tree. Comparing the ``SCIP default'' results with, in particular, the
``ZS'' results, using (exact) separation of FS cuts based on the
\emph{known}, original parity-checks allows for more instances to be solved
(thus also yielding a smaller average optimality gap) in significantly
shorter time.

The results also show that more is not necessarily better: adding cuts at
every search node does lead to the smallest average number of nodes that
are explored before certifying optimality (or reaching the time limit), but
apparently also significantly increases the time needed to solve the many
(LP) subproblems during the branch-and-bound process. Separating FS cuts
only at every 5-th level of the search tree typically yields results of
comparable quality in a shorter average time, despite requiring more nodes
to be explored.

The results incorporating the generation of redundant parity-checks in
order to find further violated FS inequalities are somewhat
ambiguous. Recall that we only ever used RPC-based schemes if the basic FS
cut separation failed. Thus, the longer runtimes and less impressive gains
in solution quality of the variants ``ZSe'', ``ZS+e'' and ``ZS++e''
compared to ``ZS'' indicate that, on average, it does not pay off to search
for FS cuts by exploring RPCs if the originally given parity-checks do not
already yield such cuts. Nevertheless, it is important to emphasize that
this is an empirical average statement only -- indeed, e.g., for the
Tanner(3,5) instance, the fastest ``ZS'' version (in this case, the one
separating FS cuts in all nodes) solved the minimum distance problem in
about 2029\,s, whereas the ``ZS++e'' variant (also with separation in every
node) only took roughly 1417\,s. Thus, our results do not directly
contradict earlier claims that RPC cuts help, although their benefit
does appear to be less pronounced when integrating the corresponding separation
schemes into a full MIP framework rather than more basic branch-and-cut or
adaptive LP decoding procedures that had been explored~be\-fore. (Recall
also that \cite{TaghaviS2008} already noted that the relative improvement
provided by RPC cuts decreases as the number of variables increases, which
might have played a role here as well.)

Finally, the results for ``ZSe'' and ``ZS+e'' show that, unfortunately
(though not unexpectedly, given Theorem~\ref{thm:RPCsepaNPhard}), exact RPC
cut separation does not pay off, as the IP separation subproblems are too
hard in practice to be routinely solved within a branch-and-cut framework
for the original problem. On the positive side, we note that in the variant
``ZS++e'', i.e., where \emph{exact} RPC separation would only be used if both
state-of-the-art RPC cut \emph{heuristics} failed, the separation
IP~\eqref{RPCsepaIP} was actually \emph{never} called at all. Thus,
although one cannot be sure that the most violated cut was used, some
violated cut was always found without having to resort to exact separation.
Overall, however, the best results were still achieved by variant ``ZS''
that did not use RPC cuts at all.

\section{Concluding Remarks}
\noindent
In this note, we formally proved computational intractability of the RPC
cut separation problem for binary linear codes. Moreover, our numerical
experiments demonstrated that an associated IP formulation is indeed too
hard to solve in practice to become useful during branch-and-cut
methods. While our empirical results further showed that by combining the
existing state-of-the-art RPC separation heuristics, the exact problem
never needed to be resorted to, they also suggested that, on average, RPC
cuts do not help improve the solution process of a current powerful MIP
solver compared to using only FS cuts associated with the initial, known
parity-checks.

\bibliographystyle{IEEEtran}
\bibliography{puchert_tillmann_2020}

\end{document}